\newcommand{\be}{\begin}
\newcommand{\e}{\end}
\newcommand{\beq}{\begin{equation}}
\newcommand{\eeq}{\end{equation}}
\newcommand{\beqs}{\begin{equation*}}
\newcommand{\eeqs}{\end{equation*}}
\newcommand{\bal}{\begin{align}}
\newcommand{\eal}{\end{align}}
\newcommand{\bals}{\begin{align*}}
\newcommand{\eals}{\end{align*}}
\renewcommand{\l}{\left}
\renewcommand{\r}{\right}
\renewcommand{\d}{\mathrm{d}} %straight d for integrals
\newcommand{\set}[1]{\mathbb{#1}}
\newcommand{\curly}[1]{\mathcal{#1}}
\newcommand{\om}{\omega}
\newcommand{\eps}{\epsilon}
\newcommand{\veps}{\varepsilon}
\newcommand{\lam}{\lambda}
\newcommand{\Gam}{\Gamma}
\newcommand{\al}{\alpha}
\newcommand{\de}{\delta}
\newcommand{\ind}{\mathbbm{1}}		%indicator function
\newcommand{\ex}{\mathbb{E}}
\newcommand{\ttmatrix}[4]{\left(\be{array}{cc} #1&#2\\	#3&#4 \e{array}	\right)}
\newcommand{\tvector}[2]{\left(\be{array}{c}#1\\#2\e{array}\right)}
\newcommand{\scp}[2]{\langle#1,#2\rangle}
\newcommand{\ketbra}[2]{ |#1 \rangle \langle #2|}
\newcommand{\abs}[1]{\left\vert#1\right\vert}
\renewcommand{\it}{\infty}
\newtheorem{thm}{Theorem}[section]
\newtheorem{lm}[thm]{Lemma}
\theoremstyle{definition}
\theoremstyle{remark}
\def\dotuline{\bgroup
  \ifdim\ULdepth=\maxdimen  % Set depth based on font, if not set already
   \settodepth\ULdepth{(j}\advance\ULdepth.4pt\fi
  \markoverwith{\begingroup
  \advance\ULdepth0.08ex
  \lower\ULdepth\hbox{\kern.15em .\kern.1em}%
  \endgroup}\ULon}
\def\dashuline{\bgroup
  \ifdim\ULdepth=\maxdimen  % Set depth based on font, if not set already
   \settodepth\ULdepth{(j}\advance\ULdepth.4pt\fi
  \markoverwith{\kern.15em
  \vtop{\kern\ULdepth \hrule width .3em}%
  \kern.15em}\ULon}
\begin{document}

\title[Anomalous Lieb-Robinson Bounds]{On Anomalous Lieb-Robinson Bounds for the Fibonacci XY Chain}

\author[D. Damanik]{David Damanik}
\address{Mathematics Dept. MS-136, Rice University, Houston, TX 77005}
\email{damanik@rice.edu}

\author[M. Lemm]{Marius Lemm}
\address{Mathematics Dept. MC 253-37, California Institute of Technology, Pasadena, CA 91125}
\email{mlemm@caltech.edu}

\author[M. Lukic]{Milivoje Lukic}
\address{Mathematics Dept. MS-136, Rice University, Houston, TX 77005}
\email{milivoje.lukic@rice.edu}

\author[W. Yessen]{William Yessen}
\address{Mathematics Dept. MS-136, Rice University, Houston, TX 77005}
\email{yessen@rice.edu}

\thanks{D.Damanik was supported in part by NSF grants DMS--1067988 and DMS--1361625, M.Lukic was supported in part by NSF grant DMS--1301582, and W.Yessen was supported by NSF grant DMS--1304287}
\subjclass[2010]{47B36, 82B44.}

\begin{abstract}
We rigorously prove a new kind of anomalous (or sub-ballistic) Lieb-Robinson bound for the isotropic XY chain with Fibonacci external magnetic field at arbitrary coupling. It is anomalous in that the usual exponential decay in $|x|-v|t|$ is replaced by exponential decay in $|x|-v|t|^\al$ with $0<\al<1$. In fact, we can characterize the values of $\al$ for which such a bound holds as those exceeding $\al_u^+$, the upper transport exponent of the one-body Fibonacci Hamiltonian. Following the approach of \cite{HSS11}, we relate Lieb-Robinson bounds to dynamical bounds for the one-body Hamiltonian corresponding to the XY chain via the Jordan-Wigner transformation; in our case the one-body Hamiltonian with Fibonacci potential. We can bound its dynamics by adapting techniques developed in  \cite{DT07, DT08, D05, DGY} to our purposes. To our knowledge, this is the first rigorous derivation of anomalous quantum many-body transport.

 Along the way, we prove a new result about the one-body Fibonacci Hamiltonian: the upper transport exponent agrees with the time-averaged upper transport exponent, see Corollary \ref{cor:ofproof}. We also explain why our method does not extend to yield anomalous Lieb-Robinson bounds of power-law type for the random dimer model.
\end{abstract}

\maketitle

\tableofcontents

\section{Introduction}

Lieb-Robinson (LR) bounds were first introduced by Lieb and Robinson in 1972 \cite{LiebRobinson72}. These bounds and their generalizations \cite{Hastings04, NachtergaeleOgataSims06, NachtergaeleRazSchleinSims09, NachtergaeleSims06} concern quantum spin systems with local Hamiltonians and establish that, approximately, quantum correlations (as expressed by commutators of local observables) propagate at most ballistically under the Heisenberg dynamics. That is, commutators of observables, which are initially supported a distance $|x|$ apart, are exponentially small in $|x|-v|t|$, where $v\geq 0$ is the so-called Lieb-Robinson velocity.  Therefore, in similarity to relativistic systems, LR bounds establish the existence of a ``light cone'' $|x|\leq v|t|$ outside of which correlations are suppressed.

Lieb-Robinson bounds greatly increased in popularity about 10 years ago, when Hastings and co-workers realized that they can be used to prove exponential clustering, a higher-dimensional analogue of the Lieb-Schultz-Mattis theorem and the famous area law for the entanglement entropy in one-dimensional quantum systems with a spectral gap \cite{HastingsKoma06, NachtergaeleSims06, Hastings04, Hastings07}. Since then, LR bounds have become an important tool in condensed-matter physics and quantum information theory, e.g.\ for understanding the structure of ground states in gapped systems.

Here, we take the perspective that LR bounds characterize the dynamics of many-body quantum systems as \emph{at most ballistic}. We consider a model, the Fibonacci XY spin chain, for which this bound on the dynamics can be improved to an anomalous (or sub-ballistic) LR bound, see Definition \ref{defn:LRalpha}, establishing anomalous many-body transport in this case. We refer to \cite{DLLY-PRL} for an exposition of our results that is geared towards a physics audience.

An older preprint version of this paper discussed the extension of our results to Sturmian models. The proof was based on methods from \cite{Marin2010}, which were
since found to be flawed \cite{DGLQ}. We believe that the extension can be proved by combining our methods with the ones in \cite{DGLQ}, but we leave this task to future work.

\section{Main Results}
\subsection{The Fibonacci XY Chain}
Given an integer $n$, we take as our Hilbert space
\beqs
\mathcal{H}_n= \bigotimes_{j=1}^n \mathfrak{h}_j.
 \eeqs
 where $\mathfrak{h}_j=\set{C}^2$ for all $j$. On $\curly{H}_n$, we consider the isotropic XY chain given by the Hamiltonian
\beqs
    H_n^{XY}= - \sum_{j=1}^{n-1} \l(\sigma^x_j \sigma^x_{j+1} +\sigma^y_j \sigma^y_{j+1}   \r) +\sum_{j=1}^n V_j \sigma^z_j
\eeqs
where $\{V_j\}$ is the \emph{Fibonacci external magnetic field} defined by
\beqs
    V_j= \lam \chi_{[1-\phi^{-1},1)} (j\phi^{-1}+\om \textnormal{ mod } 1)
\eeqs
with $\lam>0$ a coupling constant, $\om\in[0,1)$ the ``phase'' and
\beqs
    \phi= \frac{1+\sqrt{5}}{2}
\eeqs
the golden mean. The Fibonacci external field is the primary model of one-dimensional quasi-periodicity. As usual, the \emph{Pauli matrices} are
\beqs
    \sigma^x =\ttmatrix{0}{1}{1}{0},\quad \sigma^y =\ttmatrix{0}{-i}{i}{0},\quad \sigma^z =\ttmatrix{1}{0}{0}{-1},
\eeqs
and $\sigma^{x,y,z}_j$ denotes $\ind_1\otimes\ldots \ind_{j-1} \otimes \sigma^{x,y,z}\otimes \ind_{j+1}\ldots \otimes \ind_n$. For a finite set $S\subset \set{Z}_+$, we define the algebra of observables on $J$ by
\beqs
 \curly{A}_J = \bigotimes_{j\in J} \curly{B}(\mathfrak{h}_j),
\eeqs
where $\curly{B}(\mathfrak{h}_j)$ is the set of bounded linear operators on $\mathfrak{h}_j=\set{C}^2$, which is of course just the set of all complex $2\times 2$ matrices. We will often make use of the fact that for $J\subset J'$, there is a natural embedding of $\curly{A}_J$ into $\curly{A}_{J'}$ by tensoring with the identity on $J'\setminus J$. Also, we denote $\curly{A}_j\equiv \curly{A}_{\{j\}}$.

Finally, we recall that the \emph{Heisenberg dynamics} of an observable $A\in \curly{A}_J$ are defined by
\beqs
    \tau_t^n(A) = e^{it H_n^{XY}} A e^{-it H_n^{XY}}.
\eeqs

\textbf{Disclaimer }
We usually do not keep track of constants that depend on model parameters, one exception is the dependence on the parameter $\om$ as discussed later. We write $C,C',\ldots$ for constants  that may have different numerical values from line to line and $C_0,C_1,\ldots$ for constants that appear in the statement of a result.

\subsection{Anomalous Lieb-Robinson Bounds}
To phrase our results, it will be convenient to adopt the following convention for stating anomalous LR bounds (we will soon discuss what we mean by ``anomalous''):

\be{defn}
\label{defn:LRalpha}
We say that ``$LR(\al)$'' holds if there exist constants $C_0,\mu>0$ and $v\geq 0$ such that for all integers $1\leq j< j' \leq n$ and all $t>0$, we have
\beq
\label{eq:result}
\|[\tau_t^n(A),B]\|\leq C_0 \|A\| \|B\| e^{-\mu(|j'-j|-vt^{\al})}
\eeq
for all $A\in \curly{A}_j$ and all $B\in \curly{A}_{\{j',\ldots,n\}}$.
\e{defn}

\be{rmk}
\be{enumerate}[label=(\roman*)]
\item The ordinary LR bound is $LR(1)$ and by the very general analysis of \cite{NachtergaeleSims06}, $LR(1)$ holds in our case\footnote{In fact even the slightly stronger version of $LR(1)$ with $A\in\curly{A}_J$ and $C$ independent of the cardinality $|J|$ holds. }. The bound $LR(\al)$ with $\al<1$ is qualitatively stronger than $LR(1)$ (i.e.\ modulo changes in the constants $C_0,\mu$ and $v$). Indeed, it already becomes effective when $v^{1/\alpha}t<|j'-j|^{1/\al}$. Since space is discrete, we have $|j'-j|^{1/\al}\leq |j'-j|$ and so $LR(\al)$ becomes effective at smaller distances than $LR(1)$. See also the discussion following Theorem 1 in \cite{DLLY-PRL}.
\item The assumption that $A\in \curly{A}_j$ can be generalized to $A\in\curly{A}_J$ with $\max J\leq j$ using the Leibniz rule for commutators \eqref{eq:leibniz}, but at the price of having $C_0$ depend on $|J|$. For a detailed proof of this, we refer to \cite{DLY}. In a different approach, following the proof of Corollary 3.4 in \cite{HSS11}, one can derive from $LR(\al)$ a Lieb-Robinson bound which holds for all $A\in\curly{A}_J$ with $\max J\leq j$ and $C$ \emph{independent} of $|J|$, but at the price of increasing the growth in $t$ to $\int_0^t e^{\mu vs^\al} \d s$.
    \e{enumerate}
\e{rmk}

We will phrase our results in terms of the \emph{upper transport exponent} $\al_u^+$ for the Fibonacci Hamiltonian $H$. It is just one of several exponents characterizing the dynamics associated to $H$ and we will introduce these later, in Section~\ref{sect:al'}, mainly as tools.

\be{defn}
\label{defn:alu}
Let $H$ be the operator on $\ell^2(\set{Z}_+)$ defined in \eqref{eq:Hdefn} and let $\{\de_l\}_{l\geq 1}$ be the canonical basis of $\ell^2(\set{Z}_+)$. For $t>0$ and any integer $N$, we define
\beq\label{eq:PNt}
    P(N,t)=\sum_{n>N} \l| \scp{\de_1}{e^{-itH}\de_n}  \r|^2,\qquad S^+(\al)=-\limsup_{t\rightarrow \it}\frac{\log P(t^\al-1,t)}{\log t}
\eeq
as well as the \textbf{upper transport exponent}
\beq\label{eq:alphauplus}
    \al_u^+ =  \sup_{\al\geq 0}\{S^+(\al) < \infty\}
\eeq
which, roughly speaking, characterizes the propagation rate of the fastest part of the wavepacket initally localized at $\de_1$.
\e{defn}

\begin{thm}[First main result]
\label{thm:result}
Let $\lam>0$. If $\al>\al_u^+(\lam)$, then $LR(\al)$.
\end{thm}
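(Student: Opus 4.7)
The plan is to follow the strategy of \cite{HSS11}, which relates many-body Lieb--Robinson bounds for the XY chain to the dynamics of an effective one-body Hamiltonian via the Jordan--Wigner transformation. Specifically, $H_n^{XY}$ becomes a free-fermion Hamiltonian whose hopping matrix is the discrete Schr\"odinger operator with Fibonacci potential on $\{1,\ldots,n\}$, and the commutator $[\tau_t^n(A),B]$ with $A\in\curly{A}_j$ and $B\in\curly{A}_{\{j',\ldots,n\}}$ can be expressed, up to a factor controlled by $\|A\|\|B\|$, in terms of one-body matrix elements $\scp{\de_j}{e^{-itH}\de_k}$ with $k\geq j'$. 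The many-body bound $LR(\al)$ therefore reduces to an $n$-uniform one-body estimate of the form
\beqs
    \sum_{k\geq j'}\abs{\scp{\de_j}{e^{-itH}\de_k}} \leq C e^{-\mu(|j'-j|-vt^\al)}.
\eeqs

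First I would carry out the Jordan--Wigner reduction, using $c_j = \sigma_j^-\prod_{k<j}\sigma_k^z$ to identify $H_n^{XY}$ with a free fermion system; a standard computation in the spirit of \cite{HSS11} then shows that, after commutating with $B$, only the one-body labels $k\in\{j',\ldots,n\}$ can appear in the resulting expression. The heart of the argument is the resulting one-body bound. By the very definition of $\al_u^+(\lam)$, any $\al>\al_u^+(\lam)$ satisfies $S^+(\al)=\infty$, so $P(t^\al-1,t)$ decays faster than any polynomial in $t$; this already confines the main mass of the wavepacket to the effective light cone $|k-j|\lesssim t^\al$ at the $L^2$ level. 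To upgrade this polynomial, $L^2$ control into the pointwise exponential control required by $LR(\al)$, I would import the trace-map and transfer-matrix machinery of \cite{DT07,DT08,D05,DGY} which, together with a Combes--Thomas-type estimate on complex-energy resolvents and a contour-deformation argument, converts resolvent decay into pointwise exponential decay of $\scp{\de_j}{e^{-itH}\de_k}$ outside a region of size $v t^\al$. Several steps of this analysis are naturally phrased in terms of time-averaged dynamical quantities, which is precisely why the equality of $\al_u^+$ with its time-averaged analogue (Corollary \ref{cor:ofproof}) is needed as a companion result.

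The main obstacle is precisely this upgrade from polynomial to pointwise exponential decay. The transport exponent supplies only an $L^2$ mass bound at the single spatial scale $t^\al$, whereas \eqref{eq:result} demands pointwise, exponential, and $n$- and $\om$-uniform decay at all larger scales simultaneously. Overcoming this requires the detailed spectral theory of the Fibonacci Hamiltonian -- in particular, uniform hyperbolicity of the trace map on the resolvent set and a careful treatment of boundary sites near $1$ -- combined with an elementary case split that makes the target estimate trivially true once $|j'-j|\leq vt^\al$. All remaining ingredients (the Jordan--Wigner bookkeeping and the final passage from the one-body bound to $LR(\al)$) are, in comparison, routine.
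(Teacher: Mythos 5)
Your proposal follows essentially the same route as the paper: Jordan--Wigner reduction to an $n$-uniform one-body $\ell^1$ bound on $\sum_{k\ge j'}|\scp{\de_j}{e^{-2itH_n}\de_k}|$, then Dunford functional calculus, Combes--Thomas estimates, and the trace-map/transfer-matrix lower bounds of \cite{DT07,DT08,D05,DGY} to get pointwise exponential decay outside $|j'-j|\le vt^{\al}$, and you correctly identify the $\ell^2$-to-pointwise upgrade as the crux. One small correction of emphasis: the paper's argument deliberately \emph{avoids} time-averaging (that is why it uses the non-averaged Dunford approach of \cite{DT08}), and the companion identity actually invoked is $\al'=\al_u^+$ for the trace-map exponent $\al'$ (Proposition \ref{prop:al'}), of which the equality with the time-averaged exponent (Corollary \ref{cor:ofproof}) is a byproduct rather than an input.
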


\be{rmk}
The proof also yields the explicit formulae \eqref{eq:vdefn} for $\mu$ and for the ``Lieb-Robinson velocity'' $v$. The formula for $v$ does not yield quantitative information however, because it involves the quantity $C_\de'$, which is not determined in \cite{DGY}.
\e{rmk}

Our second main result says that the upper transport exponent is truly the ``correct'' one for the LR bound (modulo the difference between $\geq$ and $>$).

\begin{thm}[Second main result]
\label{thm:result2}
Let $\lam > 0$. If $LR(\al)$, then $\al\geq \al_u^+(\lam)$.
\end{thm}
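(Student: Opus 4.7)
The plan is to reverse the logic of Theorem \ref{thm:result}: use the Jordan--Wigner transformation to extract a single matrix element of the one-body Fibonacci evolution $e^{-itH}$ as a lower bound for a many-body commutator, feed this into the hypothetical $LR(\al)$, and then sum the resulting pointwise estimates to show that $P(t^\beta-1,t)$ decays super-polynomially for every $\beta>\al$.

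Under the Jordan--Wigner transformation $c_j=\sigma_j^-\prod_{k<j}\sigma_k^z$, $H_n^{XY}$ becomes (up to an additive constant) the free-fermion Hamiltonian $\sum_{j,k}h^{(n)}_{jk}c_j^*c_k$, where $h^{(n)}$ is the finite-volume Fibonacci discrete Schr\"odinger operator on $\ell^2(\{1,\ldots,n\})$. Being quadratic, it evolves the fermion operators linearly: $\tau_t^n(c_1^*)=\sum_l U_{1l}(t)c_l^*$ with $U(t)=e^{ith^{(n)}}$. Take the local Pauli observables $A=\sigma_1^+=c_1^*\in\curly{A}_1$ and $B=\sigma_{j'}^-\in\curly{A}_{\{j',\ldots,n\}}$; in fermion language $B=S_{j'-1}c_{j'}$ with $S_{j'-1}=\prod_{k<j'}\sigma_k^z$. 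Using $\{c_l^*,c_{j'}\}=\delta_{lj'}$ together with $c_l^*S_{j'-1}=-S_{j'-1}c_l^*$ for $l<j'$ and $c_l^*S_{j'-1}=S_{j'-1}c_l^*$ for $l\geq j'$, one finds that in the expansion
\[
[\tau_t^n(\sigma_1^+),\sigma_{j'}^-]=\sum_l U_{1l}(t)\,[c_l^*,S_{j'-1}c_{j'}],
\]
every term with $l<j'$ vanishes, the $l=j'$ term contributes $-U_{1j'}(t)\,S_{j'-1}\sigma_{j'}^z$, and every term with $l>j'$ carries $c_{j'}$ on the right. Evaluating on the all-spin-up vector $\psi_0$ kills the $l>j'$ contributions (since $c_{j'}\psi_0=0$) and gives $\scp{\psi_0}{[\tau_t^n(\sigma_1^+),\sigma_{j'}^-]\psi_0}=-U_{1j'}(t)$, whence
\[
\|[\tau_t^n(\sigma_1^+),\sigma_{j'}^-]\|\;\geq\;\bigl|\scp{\de_1}{e^{ith^{(n)}}\de_{j'}}\bigr|.
\]

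Plugging this into $LR(\al)$ (with constants $C_0,\mu,v$) and squaring yields
\[
\bigl|\scp{\de_1}{e^{ith^{(n)}}\de_{j'}}\bigr|^2\;\leq\;C_0^2\,e^{-2\mu(j'-1-vt^\al)}
\]
uniformly in $n\geq j'$; standard strong convergence of finite-volume approximations carries the same bound over to the infinite-volume $H$. Fix $\beta>\al$ and sum over $j'>t^\beta-1$:
\[
P(t^\beta-1,t)\;\leq\;\frac{C_0^2}{1-e^{-2\mu}}\,e^{-2\mu(t^\beta-1-vt^\al)}\;\leq\;C\,e^{-\mu t^\beta}
\]
for all sufficiently large $t$, since $t^\beta-vt^\al\geq\tfrac12 t^\beta$ eventually. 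Thus $P(t^\beta-1,t)$ decays faster than any power of $t$, so $S^+(\beta)=+\infty$, and since this holds for every $\beta>\al$, \eqref{eq:alphauplus} gives $\al_u^+(\lam)\leq\al$.

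The main obstacle is the algebraic identity in the Jordan--Wigner step: the string $S_{j'-1}$ in the fermionic expression of $\sigma_{j'}^-$ threatens to smear the commutator across many one-body matrix elements, but the free-fermion structure of $H_n^{XY}$ together with the vacuum projection conspire so that exactly one matrix element $U_{1j'}(t)$ survives. Once this is verified (as in the analysis of \cite{HSS11} or \cite{DLY}), the remaining steps are routine summation and limiting arguments.
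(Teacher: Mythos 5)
Your proposal is correct and follows essentially the same route as the paper's proof: lower-bound one specific commutator by a one-body propagator matrix element by evaluating on the all-spin-up product vector, feed that pointwise bound into $LR(\al)$, sum over $j'$, pass to the half-line limit, and conclude $S^+(\beta)=\infty$ for every $\beta>\al$, hence $\al_u^+\leq\al$. Your choice $A=\sigma_1^+$, $B=\sigma_{j'}^-$ is in fact a mild simplification of the paper's $[\tau_t^n(c_l),a_r^*]$ with general $l$: since $c_1^*=a_1^*$ carries no Jordan--Wigner string it is genuinely local, so you can apply $LR(\al)$ directly without first passing to $LR_{\mathrm{fermi}}(\al)$, and fixing the left site at $1$ removes the need for the shift/covariance step \eqref{eq:covariance} (the paper's shift changes the phase $\om$, which your version avoids). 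Two small corrections. First, by Proposition \ref{prop:tau} the propagator that appears is $e^{-2iH_n(\om)t}$, not $e^{itH_n}$; the factor $2$ only replaces $v$ by $2^{-\al}v$ after substituting $t\mapsto t/2$, so it is harmless but should be tracked, as the paper does. Second, with the paper's conventions the all-up vector $\psi_0$ is annihilated by the raising operators, i.e.\ $a_l^*\psi_0=0$ and hence $c_l^*\psi_0=0$, whereas $c_{j'}\psi_0\neq 0$ (the lowering operator $a_{j'}$ maps $\ket{\uparrow}$ to $\ket{\downarrow}$); the terms with $l>j'$, which equal $\pm 2S_{j'-1}c_{j'}c_l^*$, do vanish on $\psi_0$, but because of the factor $c_l^*$, not $c_{j'}$. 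Neither point affects the validity of your argument.
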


\be{rmk}\label{r.generallowerbound}
The proof of Theorem~\ref{thm:result2} works in complete generality and has nothing to do with the Fibonacci case. That is, the implication $LR(\al) \Rightarrow \al \geq \al_u^+$ holds for general choices of the $\{ V_j \}$ in the isotropic XY chain (e.g.\ periodic $\{V_j\}$ for which $\al_u^+=1$).
\e{rmk}

\subsection{Discussion}
Let us explain why we call these Lieb-Robinson bounds ``anomalous''. The usual LR bound is $LR(1)$ and it implies that commutators are small, up to an exponential error, outside of the ``light cone'' given by $|j-j'|\leq vt$. For $v>0$, this behavior corresponds to \emph{ballistic transport} and for $v=0$ to \emph{dynamical localization}.

By contrast, Theorems \ref{thm:result} and \ref{thm:result2} say that the ``light-cone'' for the Fibonacci XY chain is changed to the set $|j-j'|\leq vt^{\al_u^+}$ with $0<\al_u^+<1$ (see Proposition \ref{prop:close} below). In other words, quantum-mechanical correlations spread \emph{sub-ballistically} for this model and whenever $\al_u^+\neq \frac{1}{2}$ such behavior is commonly referred to as \emph{anomalous transport}. To our knowledge, this is the first rigorous proof of anomalous quantum many-body transport. It is physically appealing that the upper bound on transport at the one-body level $\al_u^+$ is precisely what governs transport at the many-body level, but in light of the fact that the XY chain can be mapped to non-interacting particles, this is not so surprising (compare however with the situation for the random dimer model discussed in the final section).

Roughly speaking, the anomalous behavior of the Fibonacci XY chain is a consequence of the \emph{quasi-periodicity} of the $V_i$, which is situated in between the two extreme cases of
\be{itemize}
\item[(a)] \emph{periodic} external fields of $V_i$. These correspond to \emph{ballistic transport}, which is obvious e.g.\ for the free case $V_i\equiv0$. This case is discussed in an upcoming paper of three of the authors \cite{DLY}.
\item[(b)] \emph{disordered}, that is i.i.d.\ random, external fields of $V_i$. These were shown to lead to exponential decay of correlations \cite{KleinPerez} and zero-velocity LR bounds (i.e.\ dynamical localization) \cite{HSS11}.
\e{itemize}

To close the discussion of anomalous transport, we record some known upper and lower bounds on $\al_u^+$. For further remarks on the anomalous LR bound, we refer the interested reader to \cite{DLLY-PRL}. 

\be{prop}
\label{prop:close}
\be{enumerate}[label=(\roman*)]
\item For all $\lam>0$, we have $\al_u^+(\lam)>0$.
\item For all $\lam >\sqrt{24}$, we have
\beqs
    \al_u^+(\lam)\geq \frac{2\log \phi}{\log(2\lam+22)}
\eeqs
with $\phi=\frac{1+\sqrt{5}}{2}$.
\item For all $\lam\geq 8$, we have
\beqs
 \al_u^+\leq \frac{2\log\phi}{\log\xi(\lam)},
\eeqs
for $\lam\geq 8$ and $\xi(\lam)=\frac{1}{2}(\lam-4+\sqrt{(\lam-4)^2-12})$.
\e{enumerate}
\e{prop}

Numerically, we can use these to see, e.g., that
\beqs
0.1< \al_u^+ <0.5
\eeqs
for $12\leq \lambda\leq 7,000$. Hence, the transport exponent is truly anomalous for such $\lambda$ (we note the upper bound by $0.5$ because that particular exponent is sometimes referred to as ``diffusive'' transport and not assigned the ``anomalous'' label).

\be{proof}
For (i), see \cite{DKL00} and note that $\tilde{\al}_u^+\leq \al_u^+$ according to Lemma \ref{lm:averaging}. For (ii), see Theorem 3 of \cite{DT08} and use $\tilde{\al}_u^+\leq \al_u^+$.
For (iii), see Theorem 3 of \cite{DT07}.
\e{proof}

\subsection{Equality of Transport Exponents}
We explicitly note the following corollary of the proof of Proposition~\ref{prop:al'}, because we believe it to be of independent interest.
The averaged transport exponents $\tilde \al_u^\pm$ are defined in Section~\ref{sect:al'}.

\be{cor}
\label{cor:ofproof}
Let $\lam>0$. Then, $\al_u^+=\tilde \al_u^+=\tilde \al_u^-$
\e{cor}

The second equality was already observed in \cite{DGY}, the first one is new.

\section{The Relation to One-Body Dynamics}
\subsection{Diagonalizing the XY Chain}
We will use the standard procedure, going back to \cite{LSM61}, of diagonalizing the XY chain via the Jordan-Wigner transformation to free fermions, followed by a Bogoliubov transformation. 

We only recall what we need to establish notation for the relevant objects. For the details of the diagonalization procedure, we refer to Section~3.1 in \cite{HSS11}. The first step is to introduce the lowering operator
\beqs
 a_j = \frac{1}{2} \l(  \sigma_j^x-i \sigma^y_j  \r)
\eeqs
and its adjoint the raising operator $a_j^*$ for all $1\leq j\leq n$. The Jordan-Wigner transformation maps these to the fermion operators\footnote{This means that they satisfy the canonical anticommutation relations.}
\beq
\label{eq:JW}
    c_1 = a_1,\quad c_j = \sigma^z_1\ldots \sigma^z_{j-1} a_j\quad \text{for } 2\leq j\leq n.
\eeq
In terms of these operators, the Hamiltonian reads
\beqs
    H^{XY}_n = \sum_{j=1}^{n-1} \sum_{k=1}^n c^*_j (H_n)_{j,k} c_k
\eeqs
where we introduced the $n\times n$ matrix
\beq
\label{eq:Hndefn}
    H_n=\left(\be{array}{cccc}
    V_1 & 1 & &\\
    1 & \ddots & \ddots & \\
     & \ddots & \ddots & 1 \\
    & &1 & V_n\\
    \e{array}	\right).
\eeq
We will refer to $H_n$ as the \emph{one-body Fibonacci Hamiltonian}. To emphasize that it depends on the phase $\om\in [0,1)$ we sometimes write $H_n(\om)$.

We will heavily use that the Heisenberg dynamics of the $c_j$ operators is given in the following simple fashion.

\be{prop}[\cite{HSS11}]
\label{prop:tau}
For all $1\leq j,k\leq n$ and all $\om\in [0,1)$, we have
\beq
\label{eq:3.15}
 \tau_t^n(c_j) = \sum_{k=0}^n (e^{-2iH_n(\om)t})_{j,k} c_k.
\eeq
\e{prop}

\be{proof}
This is a consequence of formula (3.15) in \cite{HSS11} for $\tau_t^n(c_j)$, which is proved via Bogoliubov transformation.
\e{proof}

\subsection{Relating LR Bounds for the XY Chain to Fermionic LR Bounds}
The following lemma is instrumental in relating the LR bounds for the $c_j$ back to LR bounds for local observables in the XY chain. The difficulty is that the Jordan-Wigner transformation \eqref{eq:JW} is \emph{non-local}. This was overcome in \cite{HSS11} at the relatively small price of the extra sum over $k$ in \eqref{eq:ksum}. 

\be{defn}
We say that $LR_{\text{fermi}}(\al)$ holds if there exist constants $C_0,\mu>0$ and $v\geq 0$ such that for all integers $1\leq j< j'\leq n$ and all $t>0$, we have
\beqs
\|[\tau_t^n(c_j),B]\|+\|[\tau_t^n(c_j^*),B]\|\leq C_1 \|B\| e^{-\mu(|j'-j|-vt^\al)}
\eeqs
for all $B\in \curly{A}_{\{j',\ldots,n\}}$.
\e{defn}

For obvious reasons, we will make heavy use of

\be{lm}[\cite{HSS11}]
\label{lm:3.2}
$LR(\al)$ holds if and only if $LR_{\mathrm{fermi}}(\al)$ holds.
\e{lm}

We use the strategy of the proof of Theorem 3.2 in \cite{HSS11}, but we allow for a $t$-dependence of the form appearing on the right-hand side of the anomalous LR bound and we note that the argument can also be run in reverse.

\be{proof}
We first prove the ``if'' part. Let $A=a_j$. Since $(\sigma^z_i)^2=1$ and $[\sigma_i^z,\sigma^z_j]=0$ when $i\neq j$, we can easily invert the Jordan-Wigner transformation \eqref{eq:JW} to get
\beqs
 a_1 = c_1,\qquad a_j = \sigma_1^z\ldots \sigma_{j-1}^z c_j,\quad\forall j\geq 2.
\eeqs
By the automorphism property of $\tau_t^n$ and the ``Leibniz rule'' for commutators
\beq
\label{eq:leibniz}
[AB,C]= A[B,C]+ [A,C]B,
\eeq
we have
\beqs
 [\tau_t^n(a_j),B]= \tau_t^n(\sigma^z_{1})\ldots \tau_t^n(\sigma^z_{j-1}) [\tau_t^n(\sigma^z_{j}),B]
 + [\tau_t^n(\sigma^z_{1}),\ldots,\tau_t^n(\sigma^z_{j-1}),B]
 \tau_t^n(c_{j}).
\eeqs
By unitarity of $\tau_t^n$, this implies
\beqs
    \|[\tau_t^n(a_j),B]\|\leq \|[\tau_t^n(c_j),B]\| + C(j-1,B),
\eeqs
where we introduced
\beqs
    C(l,B)= \|[\tau_t^n(\sigma^z_{1}),\ldots,\tau_t^n(\sigma^z_{l}),B]\|.
\eeqs
Applying \eqref{eq:leibniz} again, we find
\beqs
    C(l,B)\leq C(l-1,B) + \|[\tau_t^n(\sigma^z_{l}),B]\|.
\eeqs
Since $\sigma^z_l=2 c_l^*c_l -\ind_{\set{C}^2}$, we get
\beqs
    C(l,B)\leq C(l-1,B) + 2 \|[\tau_t^n(c_l),B]\|+ 2 \|[\tau_t^n(c_l^*),B]\|,
\eeqs
which we can iterate to obtain
\beq
\label{eq:ksum}
    \|[\tau_t^n(a_j),B]\|\leq 2 \sum_{l=1}^j    \l( \|[\tau_t^n(c_l),B]\|+ \|[\tau_t^n(c_l^*),B]\|\r).
\eeq
By performing a geometric series, it is now obvious that $LR_\text{fermi}(\alpha)$ implies $LR(\al)$ in the special case when $A=a_j$. Extending this to all of $\curly{A}_j$, which is spanned by $\{a_j,a_j^*,a_j^*a_j,a_ja_j^*\}$, is not hard; we refer to \cite{HSS11} for the details.

The ``only if'' part follows by the exact same reasoning, since the Jordan-Wigner transform and its inverse are of the same form.
\e{proof}

\section{Proof of the Second Main Result}
We begin with the proof of the \emph{second} main result, Theorem \ref{thm:result2}.

\begin{lm}\label{thm:propagator-lower}
For any $l \in [1, n)$ and any $r\in (l, n]$, for any $t\in \mathbb{R}$,
\begin{align*}
\|[\tau_t^n(c_l),a_r^* ]\|  \geq \abs{(e^{-2iH_n(\om)t})_{l,r} }.
\end{align*}
\end{lm}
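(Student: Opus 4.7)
The plan is to exploit Proposition \ref{prop:tau}, which gives the explicit expansion
$$\tau_t^n(c_l) = \sum_{k=1}^n M_{l,k}\, c_k, \qquad M := e^{-2iH_n(\om)t},$$
and then lower-bound the commutator with $a_r^*$ by evaluating it on a carefully chosen test vector, namely the fermionic vacuum $\ket{\Om}$ defined by $a_j\ket{\Om}=0$ for all $j$. The idea is that all terms in the expansion except the $k=r$ term will vanish on $\ket{\Om}$, leaving just $M_{l,r}$ times a unit vector.

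By linearity, $[\tau_t^n(c_l),a_r^*] = \sum_k M_{l,k}[c_k,a_r^*]$, and I would compute these term-by-term from $c_k = \sigma^z_1\cdots\sigma^z_{k-1}a_k$ (with $c_1=a_1$) by splitting into three cases. For $k<r$, the operator $c_k$ is supported on sites $1,\dots,k$, disjoint from site $r$, hence $[c_k,a_r^*]=0$. For $k>r$, the string $\sigma^z_1\cdots\sigma^z_{k-1}$ contains the factor $\sigma^z_r$, which anticommutes with $a_r^*$, while every other factor commutes with $a_r^*$; this yields $a_r^* c_k = -c_k a_r^*$ and therefore $[c_k,a_r^*] = 2c_k a_r^*$. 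For $k=r$, the $\sigma^z_j$'s ($j<r$) commute with $a_r^*$, so
$$[c_r,a_r^*] = \sigma^z_1\cdots\sigma^z_{r-1}(a_r a_r^* - a_r^* a_r) = \pm\,\sigma^z_1\cdots\sigma^z_r,$$
which is a unitary sign operator.

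Now evaluate on $\ket{\Om}$. For $k>r$, since $a_r^*\ket{\Om}$ has site $r$ excited but every other site still in the vacuum state, the factor $a_k$ in $c_k a_r^*\ket{\Om}$ annihilates site $k\neq r$ and gives zero. For $k=r$, the string $\sigma^z_1\cdots\sigma^z_r$ preserves $\ket{\Om}$ up to a sign. Hence
$$[\tau_t^n(c_l),a_r^*]\ket{\Om} = \pm M_{l,r}\ket{\Om},$$
and since $\|\ket{\Om}\|=1$,
$$\|[\tau_t^n(c_l),a_r^*]\| \;\geq\; \|[\tau_t^n(c_l),a_r^*]\ket{\Om}\| \;=\; |M_{l,r}| \;=\; |(e^{-2iH_n(\om)t})_{l,r}|.$$

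The only real subtlety is the bookkeeping of Jordan-Wigner strings and the signs from passing $a_r^*$ through $\sigma^z_r$; no analytic obstacle arises, because the vacuum $\ket{\Om}$ is chosen precisely so that the $k>r$ contributions (which are the only nontrivial operator terms remaining after the commutator reduction) all disappear.
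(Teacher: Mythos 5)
Your argument is correct and is essentially the paper's own proof: both expand $\tau_t^n(c_l)$ via Proposition \ref{prop:tau}, discard the $k<r$ terms by locality, and evaluate the remaining commutator on a product test vector that annihilates every $k>r$ contribution and turns the $k=r$ term into $\pm(e^{-2iH_n(\om)t})_{l,r}$ times a unit vector. The only (immaterial) difference is the choice of test state --- the paper uses $\bigotimes_{1}^{n}\left(\begin{smallmatrix}1\\0\end{smallmatrix}\right)$, which is annihilated by the $a_j^*$, while you use the state annihilated by the $a_j$.
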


\begin{proof}
From \eqref{eq:3.15} we have
\begin{align*}
\|[\tau_t^n(c_l),a_r^* ]\|  = \left \lVert \left[\sum_{j=1}^{n} (e^{-2iH_n(\om)t})_{l,j} c_{j} ,\hspace{1mm} a_r^*\right] \right\rVert.
\end{align*}
Observe that for all $j < r$, $a_r^*$ commutes with $c_j$. Thus we can write
\[
\|[\tau_t^n(c_l),a_r^* ]\|  =  \left\lVert \left[\sum_{j \geq r}^{n} (e^{-2iH_n(\om)t})_{l,j} c_{j}, a_r^*\right] \right\rVert.
\]
Notice that for each $j > r$, $c_j = \sigma_1^{(z)}\cdots\sigma_{j-1}^{(z)}a_j$, and $a_r^*$ commutes with $a_j$ and every $\sigma_i^{(z)}$ with $i\neq r$. On the other hand, notice that for every $i$, $\sigma_i^{(z)}v=v$ with $v=\bigotimes_1^n \left(\begin{smallmatrix} 1\\ 0\end{smallmatrix}\right)$. Thus, since $a_r^*\left(\begin{smallmatrix}1\\0\end{smallmatrix}\right)=0$, we have
\begin{align*}
\left(\left[\sum_{j \geq r+1}^{n} (e^{-2iH_n(\om)t})_{l,j} c_{j}, a_r^*\right]  \right)v = 0,
\end{align*}
and
\begin{align*}
[(e^{-2iH_n(\om)t})_{l,r} c_r, a_r^*]v = (e^{-2iH_n(\om)t})_{l,r} v.
\end{align*}
Thus we have
\[
\|[\tau_t^n(c_l),a_r^* ]\|\geq \left\lVert [(e^{-2iH_n(\om)t})_{l,r} c_r, a_r^*]v \right\rVert = \left\lVert (e^{-2iH_n(\om)t})_{l,r} v\right\rVert = \left\lvert (e^{-2iH_n(\om)t})_{l,r} \right\rvert. \qedhere
\]
\end{proof}

\begin{proof}[Proof of Theorem \ref{thm:result2}]
By the previous lemma and the assumption of Theorem \ref{thm:result2}, we have
\begin{align*}
\abs{\langle \delta_l, e^{-2it H_n(\omega)}\delta_r\rangle} \leq \|[\tau^n_t(c_l), a_r^*]\|\leq Ce^{-\eta (r-l-v\abs{t}^\alpha)}.
\end{align*}
Let us shift everything so that we get
\begin{align*}
\abs{\langle \delta_l, e^{-2it H_n(\omega)}\delta_r\rangle} = \abs{\langle \delta_{r-l+1}, e^{-2it H_n(\tilde\omega)} \delta_1\rangle}
\end{align*}
(notice the change in the phase from $\omega$ to $\tilde{\omega}$ as a result of the shift, compare \eqref{eq:covariance}). Since we initially put no restrictions on $l$ and $r$, we get, for each $m$,
\begin{align}
\abs{\langle \delta_m, e^{-2it H_n(\tilde\omega)}\delta_0\rangle}^2 \leq C e^{- 2\eta (m - v\abs{t}^\alpha)}.
\end{align}
Now if we define, for all $N \leq n$,
\begin{align*}
P_\mathrm{out}^{(n)}(N, t):=\sum_{n \geq m\geq N}\abs{\langle \delta_m, e^{-it H_n(\tilde\omega)}\delta_1\rangle}^2,
\end{align*}
and set $P_\mathrm{out}^{(n)}(N, t) = 0$ for all $N > n$, we obtain
\begin{align*}
P_\mathrm{out}^{(n)}(N, 2t)\leq \sum_{n \geq m \geq N} C e^{- 2\eta (m - v\abs{t}^\alpha)} \lesssim e^{-2\eta(N - v \abs{t}^\alpha)}.
\end{align*}
By the dominated convergence theorem we get
\begin{align*}
\lim_{n \rightarrow \infty} P_\mathrm{out}^{(n)}(N, 2t) = P(N-1, 2t) \hspace{2mm} \text{for all}\hspace{2mm} t.
\end{align*}
In particular,
\begin{align*}
P(t^{\gamma}-1, t) \lesssim e^{- 2\eta (t^{\gamma} - v |t/2|^\alpha)}.
\end{align*}
Thus for all $\gamma > \alpha$, we have $P(t^{\gamma}-1, t) \lesssim e^{- \eta t^{\gamma}}$ for $t\gg 1$. By \eqref{eq:PNt} and \eqref{eq:alphauplus}, this implies $\gamma \ge \alpha_u^+$, so that $\alpha \geq \alpha_u^+$.
\end{proof}

\section{Proof of the First Main Result}

\subsection{The Transport Exponent in a New Guise}
We begin the proof of the first main result, Theorem \ref{thm:result}. It will be convenient for us to use the following quantity, which we will soon see agrees with $\al_u^+$, introduced in the recent paper \cite{DGY} to study transport exponents from a dynamical systems perspective.

We define 
\beq
\label{eq:al'defn}
    \al' \equiv \al'(\lam)= \frac{\log\l(\frac{1+\sqrt{5}}{2}\r)}{\lim_{k\rightarrow\it }\frac{1}{k}\log \min_{j=1,\ldots F_k}|x_k'(E_k^{j})|},
\eeq
where $F_k$ the $k$-th Fibonacci number, the Fibonacci trace map $x_k$ is defined in \eqref{eq:xMdefn} and $E_k^{(j)}$ is defined in \cite{DGY}. The precise definitions of these quantities are of limited relevance here, because we will use results of \cite{DGY} tailor-made for the analysis of $\al'$ as a ``black box''. The limit in \eqref{eq:al'defn} exists, by Proposition 3.7 in \cite{DGY}.

Another reason why the quantities appearing in the definition of $\al'$ are of limited relevance here, is that we will prove
\be{prop}
\label{prop:al'}
It holds that $\al'=\al_u^+$.
\e{prop}

\subsection{Key Result: Fermionic LR Bounds from One-Body Dynamics}

According to Lemma \ref{lm:3.2}, it suffices to prove LR bounds for the fermion operators $c_j$. They are established by the following key result:

\be{thm}[Key result]
\label{thm:LRc}
Let $\lam>0$. If $\al>\al'(\lam)$, then $LR_{\mathrm{fermi}}(\al')$.
\e{thm}

The first main result now follows easily:

\be{proof}[Proof of Theorem \ref{thm:result}]
This is a direct consequence of Theorem \ref{thm:LRc}, Proposition \ref{prop:al'} and Lemma \ref{lm:3.2}.
\e{proof}

It thus remains for us to prove Theorem \ref{thm:LRc} (Section~\ref{sect:key}) and Proposition~\ref{prop:al'} (Section~\ref{sect:al'}).

\section{Proof of the Key Result}
\label{sect:key}
\subsection{Strategy of Proof}
The proof of Theorem \ref{thm:LRc} is based on two main ingredients:
\be{itemize}
\item[(a)] Proposition \ref{prop:tau}, which gives a simple expression for the Heisenberg dynamics of $c_j$ in terms of the one-body Fibonacci Hamiltonian $H_n$. We recall that the reason why this useful formula holds is that the fermions described by the $c_j$ are \emph{non-interacting}.
\item[(b)]
    \emph{Dynamical upper bounds} for $H_n$, established by Damanik - Tcheremchantsev \cite{DT07, DT08, D05} and Damanik - Gorodetski - Yessen \cite{DGY}. Some of their tools, like the Dunford functional calculus approach, work for rather general one-dimensional quantum systems. However, the crucial \emph{exponential} lower bound on transfer matrix norms, which is a result of \cite{DGY} quoted here as Proposition \ref{prop:trace} is special to the Fibonacci case. Since the methods of \cite{DGY} apply to arbitrary coupling strength $\lam>0$, so do our results.
\e{itemize}

Applying (a) is trivial. Regarding (b), we need to modify the existing arguments somewhat. The \textbf{main difficulty} for us is that bounds on transport exponents involve \emph{probabilities} (see the definition of $\al_u^+$), which according to quantum theory are given as appropriate $\ell^2$-norms. In proving LR bounds however, we are naturally led to consider $\ell^1$-norms instead. Since this means we do not have orthogonality at our disposal, we need to develop an alternative approach\footnote{We make remark that using the Cauchy-Schwarz inequality to go from $\ell^1$-norms to $\ell^2$-norms, one picks up a factor $\sqrt{n}$ and hence loses the required uniformity in $n$.} and we find that combining a \emph{resolution of the identity} with \emph{Combes-Thomas estimates} works. Though we do not state them explicitly, our approach yields pointwise (i.e.\ non-summed) bounds on the matrix elements of the resolvent and therefore also of the propagator. (This is in contrast to \cite{DT07}, which does not produce pointwise bounds for the resolvent because Lemma 1 in \cite{DT07} involves the full $\|\cdot\|_2$-norm.)

 Some more minor obstructions are:

\be{itemize}

\item[(i)] Our $H_n$ is initially only defined on $\curly{H}_n$, whereas one usually works with full-line or half-line operators. Thus, we first need to extend $H_n$ to an appropriate half-line operator $H$ on $\ell^2(\set{Z}_+)$.

\item[(ii)] To analyze transport exponents, one is only interested in large-time behavior and hence assumes $t\geq 1$ throughout for technical reasons, while we need results for all $t>0$.

\item[(iii)] To simplify our analysis, we reduce to the initial state $\de_1=(1,0,\ldots)$ with an arbitrary phase $\om\in [0,1)$ via the \emph{covariance under shifts} of the half-line Fibonacci Hamiltonian, see \eqref{eq:covariance}. While it is often assumed that $\om=0$, it is well known, see \cite{DT07}, \cite{DGY}, that one can extend to arbitrary phases $\om\in [0,1)$ via the methods of \cite{D05} and this is precisely what we do.

\e{itemize}

\subsection{The Dunford Functional Calculus Approach to Quantum Dynamics}
It suffices to bound $[\tau_t^n(c_j),B]$, since one obtains the same bound for $[\tau_t^n(c_j^*),B]$ by taking adjoints. By \eqref{eq:3.15} and the fact that $[c_k,B]=0$ when $k<j'$, we get
\beqs
    [\tau_t^n(c_j),B] = \sum_{k=j'}^n \l( \l(e^{-2iH_n(\om)t}\r)_{j,k} [c_k,B] \r),
\eeqs
which implies
\beq
\label{eq:pf1}
    \|[\tau_t^n(c_j),B]\|\leq 2\|B\| \sum_{k=j'}^n \l|
    \l(e^{-2iH_n(\om)t}\r)_{j,k}\r|.
\eeq
From now on, we will work on the half-line Hilbert space
\beqs
    \curly{H}=\ell^2(\set{Z}_+),
\eeqs
to which we trivially extend $H_n$ by setting $H_n\de_l=0$ for $l>n$.
In the canonical basis $\{\de_l\}_{l\geq 1}$ of $\curly{H}$, we
can write the matrix elements of the time-evolution operator as
\beqs
     \scp{\de_j}{e^{-2iH_n(\om)t}\de_k}= \l(e^{-2iH_n(\om)t}\r)_{j,k}.
\eeqs
We follow the approach of \cite{DT08} and establish dynamical upper bounds \emph{without time-averaging} via the \emph{Dunford functional calculus} \cite{DunfordSchwartz}: For all $1\leq j,k\leq n$, we have
\beq
\label{eq:dunford}
    \scp{\de_j}{e^{-2iH_nt}\de_{k}} =
     -\frac{1}{2\pi i} \int_\Gam e^{-it z} \scp{\de_j}{\frac{1}{-2H_n-z}\de_{k}} \, \d z,
\eeq
where $z$ stands for $z\ind_{\set{Z}_+}$ and $\Gam$ is any positively oriented contour in $\set{C}$ that encloses the spectrum $\sigma(-2H_n)$. Note the slightly unconventional appearance of $-2H_n$ instead of $H_n$ on the right-hand side. We will choose the same rectangular $\Gam$ as was used
in the proof of Lemma 2 in \cite{DT08}. We choose
\beq
\label{eq:Kdefn}
    K= \min\{4,2\lam+5\}
\eeq
and observe
\beqs
\sigma(-2H_n)\subset [-K+1,K-1].
\eeqs
We recall the well-known

\be{prop}[Combes-Thomas estimate \cite{CombesThomas}]
There is a universal constant $C_T$ such that for all $1 \leq l,m \leq n$,
\beq
\label{eq:CT}
\l| \scp{\de_l}{\frac{1}{-2H_n-z} \, \de_m}\r| \leq 2 d^{-1} e^{-C_T d |l-m|},
\eeq
where
\beqs
d=\min\{\textnormal{dist}(z,\sigma(-2H_n)),1\}.
\eeqs
\e{prop}

\be{proof}
See the appendix in \cite{GKT04} for a few-line proof, which also directly extends to the case considered here where the discrete Laplacian is restricted to a box (this extension was already explicitly observed in \cite{KruegerPHD} for real $z$).
\e{proof}

When $t\geq 1$, we follow the proof of Lemma 2 in \cite{DT08} word-for-word until the last step, where we do \emph{not} use the Cauchy-Schwarz inequality. When $t<1$, we replace $t^{-1}$ by $1$ everywhere in the proof, in particular in the definition of the contour $\Gam$. Note that $|e^{-itz}|\leq e$ is still uniformly bounded for all $z\in\Gam$. The upshot is that \eqref{eq:dunford} yields the estimate
\beq
\label{eq:upshot}
    \l|\scp{\de_j}{e^{-2iH_nt}\de_{k}}\r|
    \leq C e^{-C_T (k-j)} +  C' \int_{-K}^K \l|\scp{\de_j}{\frac{1}{-2H_n-E-i\veps}\de_{k}}\r| \d E,
\eeq
for all $1\leq j< j'\leq k\leq n$. Here, we introduced the quantity
\beq
\label{eq:epsdefn}
    \veps=\min\{t^{-1},1\},
\eeq
which of course satisfies $\veps\leq 1$ for all $t>0$. Using \eqref{eq:upshot} on \eqref{eq:pf1} and performing a geometric series in the first term, we obtain
\beq
\label{eq:pf2}
    \|[\tau_t^n(c_j),B]\|
    \leq C  e^{-C_T(j'-j)}+  C' \int_{-K}^K \sum_{k=j'}^n \l|\scp{\de_j}{\frac{1}{-2H_n-E-i\veps}\de_{k}}\r| \d E.
\eeq
Clearly, we can ignore the first term in the following.

\subsection{Extension to the Half-Line Fibonacci Hamiltonian}
By expressing the time evolution in terms of resolvents, we can start modifying $H_n$ via the resolvent identity. The effect of these changes will be controlled by combining a \emph{resolution of the identity} and \emph{Combes-Thomas estimates}.

We define the half-line operator $H$ on $\ell^2(\set{Z}_+)$ as the tri-diagonal half-infinite matrix
\beq
\label{eq:Hdefn}
H=\left(\be{array}{cccc}
    V_1&1&&\\
    1&V_2& \ddots& \\
    &\ddots&\ddots&
    \e{array}	\right).
\eeq
We denote $R(z)=(-2H-z)^{-1}$ and $R_n(z)=(-2H_n-z)^{-1}$ with $z=E+i\veps$.
We recall the resolvent identity,
\beqs
R_n(z) = R(z)+R(z) 2(H_n-H) R_n(z).
\eeqs
Introducing a resolution of the identity $\sum_{l=1}^\it \ketbra{\de_l}{\de_l}$, we can bound the sum in \eqref{eq:pf2} by
\beq
\label{eq:pf3}
   \sum_{k=j'}^n \l|\scp{\de_j}{R_n(z)\de_k}\r|
   \leq \sum_{k=j'}^n \l|\scp{\de_j}{R(z)\de_k}\r|
    + \sum_{l=1}^\it \l|\scp{\de_j}{R(z)\de_l}\r| \sum_{k=j'}^n \l| \scp{\de_l}{2(H_n-H)R_n(z)\de_k}\r|.
\eeq
An important observation is that, on the one hand
\beqs
    H-H_n=\chi_n H\chi_n+\ketbra{\de_n}{\de_{n+1}}+\ketbra{\de_{n+1}}{\de_n},
\eeqs
where we used Dirac notation and wrote $\chi_n$ for the indicator function of $\set{Z}_+\setminus\{1,\ldots,n\}$. On the other hand,
for $j'\leq k\leq n$, $R_n(z) \de_k$ is supported in $\curly{H}_n$ due to the block diagonal structure of $H_n-z\ind_{\set{Z}_+}$. Together, these imply that the only contribution to the $l$-sum in \eqref{eq:pf3} comes from the $l=n+1$ term. Hence,
\beq
\label{eq:pf4}
    \sum_{k=j'}^n \l|\scp{\de_j}{R_n(z)\de_k}\r|
   \leq \sum_{k=j'}^n \l|\scp{\de_j}{R(z)\de_k}\r| + 2 \l|\scp{\de_j}{R(z)\de_{n+1}}\r|
   \sum_{k=j'}^n \l| \scp{\de_n}{R_n(z)\de_k}\r|.
\eeq
We apply the Combes-Thomas estimate \eqref{eq:CT} to \eqref{eq:pf4} and get
\beq
\label{eq:pf5}
\begin{aligned}
    \sum_{k=j'}^n \l|\scp{\de_j}{R_n(z)\de_k}\r|
   &\leq 2 \sum_{k=j'}^{n+1} \l|\scp{\de_j}{R(z)\de_k}\r| \l(  1+2\veps^{-1}\sum_{k=j'}^{n} e^{- C_T\veps |k-n|}  \r)\\
   &\leq C\frac{1}{\veps (1-e^{-C_T\veps})} \sum_{k=j'}^{\it} \l|\scp{\de_j}{R(z)\de_k}\r|.
   \end{aligned}
\eeq
Note that $n$ has disappeared from the last expression. Since we are aiming for uniformity in $n$, this is un-problematic.
Using \eqref{eq:pf5} to estimate \eqref{eq:pf2} and recalling $z=E+i\veps$, we see that it remains to control
\beq
\label{eq:pf6}
   C \frac{1}{\veps (1-e^{-C_1\veps})} \int_{-K}^K \sum_{k=j'}^{\it} \l|\scp{\de_j}{R(E+i\veps)\de_k}\r| \d E.
\eeq

\subsection{Covariance of $H(\om)$ under Shifts}
For $l\geq 0$, let $T_l$ be the right-shift operator defined by
\beqs
T_l\de_m=\de_{m+l}
\eeqs
with adjoint operator given by the left-shift $T_l^*=T_{-l}$.
We observe the following \emph{covariance property} of $H(\om)$:
\beq
\label{eq:covariance}
T_{l}^*H(\om) T_{l}=H(\om_l),
\eeq
where
\beqs
\om_l=\om +l \phi^{-1} \textnormal{mod }1.
\eeqs
Hence, for all $1\leq j<j'$, functional calculus implies
\beq
\label{eq:reduction}
\begin{aligned}
 \sum_{k=j'}^{\it} \l|\scp{\de_j}{R(E+i\eps,\om)\de_k}\r|
 &= \sum_{k=j'}^{\it}\l|\scp{\de_1}{T_{j-1}^*R(E+i\eps,\om)T_{j-1} \de_{1+k-j}} \r|\\
 &\leq \sum_{k=0}^{\it} \sup_{\om\in [0,1)} \l|\scp{\de_1}{R(E+i\eps,\om)\de_{k+N}} \r|,
\end{aligned}
\eeq
where we wrote $R(z,\om)=(-2H(\om)-z)^{-1}$ for emphasis and introduced the integer
\beqs
    N\equiv 1+j'-j
\eeqs
for notational convenience.

\subsection{Bounding Resolvent Matrix Elements by Transfer Matrix Norms}
The next step is to bound  matrix elements of the resolvent in terms of \emph{transfer matrix norms}, following the strategy of Theorem 7 in \cite{DT07} of comparing with solutions $u$ to the equation $Hu=zu$.

\be{lm}[\cite{DT07}]
\label{lm:thm7}
There exists a constant $C_2>0$ such that for  all $E\in [-K,K]$ and all $N\geq 3$, we have
\beq
\label{eq:thm7}
\sum_{k=0}^{\it} \sup_{\om\in [0,1)}\l|\scp{\de_1}{R(E+i\veps,\om)\de_{N+k}}\r| \
\leq C_2 \frac{1}{\veps (1-e^{-C_T\veps})} \min_{3\leq N_1\leq N}\sup_{\om\in [0,1)} \|\Phi_{N_1}(E+i\veps,\om)\|^{-1}.
\eeq
Here, $C_T$ is the Combes-Thomas constant and $\Phi_m(z,\om)$ is the usual transfer matrix uniquely defined by the requirement that
\beqs
 \tvector{u(m+1)}{u(m)} =\Phi_m(z,\om) \tvector{u(1)}{u(0)}, \quad \forall m\geq 0
\eeqs
holds for every half-infinite vector $u$ satisfying $H(\om)u=zu$.
\e{lm}

\be{rmk}
\be{enumerate}[label=(\roman*)]
    \item In \cite{DT07}, the left-hand side of \eqref{eq:thm7} featured an appropriate $\ell^2$-norm instead.
    \item As is emphasized in \cite{DT07}, the proof there only requires that $V$ is real and bounded. Thus, uniformity in $\om$ comes for free.
    \item The restriction that $N\geq 3$ is un-problematic for eventually concluding the LR bound, because order one terms can always be absorbed in the constant appearing on its right-hand side. As we will see later, the same idea will allow us to dispose of the potentially large $\veps$-dependent pre-factor introduced by the Combes-Thomas estimates, at the expense of increasing the exponential growth in $t$ on the right-hand side of the LR bound.
\e{enumerate}
\e{rmk}

Before the proof, we introduce some notation, which is similar to that in \cite{DT07}. For $ N\geq 3$ and $l\geq 1$, let
\beqs
    V^{\pm,N}_m(\om) =
    \be{cases}
        V_m(\om),\quad &\text{if } m\leq N\\
        \pm 2K,\quad &\text{if } m\geq N+1
    \e{cases}
\eeqs
and let $H^{\pm}_N(\om)$ be the half-infinite matrix obtained from $H(\om)$ by replacing $V_m(\om)$ with $V_m^{\pm,N}(\om)$ for all $m\geq 1$.
 Moreover, denote $R^{\pm}_N(z)=(-2H^{\pm}_N-z)^{-1}$. The crucial step is to establish the following analogue to Lemma 1 in \cite{DT07}.

\be{lm}[\cite{DT07}]
\label{lm:lm1}
We have
\beq
\label{eq:lm1}
\begin{aligned}
    &\sum_{k=0}^{\it} \sup_{\om\in [0,1)}   \l|\scp{\de_{1}}{R(E+i\veps,\om)\de_{N+k}}\r| \\
    &\leq  C \frac{1}{\veps(1-e^{-C_T\veps})}\sum_{k=0}^{\it} \sup_{\om\in [0,1)}\l|\scp{\de_{1}}{R^{\pm}_N(E+i\veps,\om)\de_{N+k}}\r| .
\end{aligned}
\eeq
\e{lm}

\be{proof}%[Proof of Lemma \ref{lm:lm1}]
We use the same combination of resolvent identity, resolution of the identity and Combes-Thomas estimate as in the extension from $H_n$ to $H$ before. Let $z=E+i\veps$.
By the resolvent identity,
\beqs
R(z,\om)=R^{\pm}_N(z,\om)
+ R^{\pm}_N(z,\om) \chi_N (V(\om)\mp 2K) R(z,\om),
\eeqs
where $\chi_N$ is the indicator function of the set $\set{Z}_+\setminus \{1,\ldots ,N\}$ and
\beqs
V(\om)=\sum_{m\geq 1} V_m(\om)\ketbra{\de_m}{\de_m}.
\eeqs
We use this on the left-hand side of \eqref{eq:lm1} and introduce a resolution of the identity to get
\beq
\label{eq:pff2}
\begin{aligned}
    &\sum_{k=0}^{\it}\sup_{\om\in [0,1)}    \l|\scp{\de_1}{R(z,\om)\de_{N+k}}\r| \\
    &\leq C \sum_{k=0}^{\it}\sup_{\om\in [0,1)}   \l(\l|\scp{\de_{1}}{R^{\pm}_N(z,\om)\de_{N+k}}\r|
    \vphantom{\sum_{l=0}^\it}\r.\\
    &\qquad\qquad\quad
    \l.\vphantom{\int_{-K}^K}+
    \sum_{l=0}^\it \l|\scp{\de_{1}}{R^{\pm}_N(z,\om)\de_{N+l}}
    \scp{\de_{N+l}}{R(z)\de_{N+k}}\r|\r),
\end{aligned}
\eeq
where we also used that $V(\om)\mp 2K$ is bounded, \emph{uniformly} in $\om$. By the ordinary Combes-Thomas estimate on the half-line, see (A.11) in \cite{GKT04}, we have
\beqs
     \l|\scp{\de_{N+l}}{R(z)\de_{N+k}}\r|
    \leq \frac{C}{\veps} e^{-C_T\veps |k-l|}
\eeqs
with constants that are uniform in $E,k,l$ and $\om$. Using this on \eqref{eq:pff2} and performing a geometric series, we get
\beqs
    \sum_{k=0}^{\it}\sup_{\om\in [0,1)}    \l|\scp{\de_1}{R(z,\om)\de_{N+k}}\r|
    \leq C \frac{1}{\veps(1-e^{-C_T\veps})} \sum_{k=0}^{\it}\sup_{\om\in [0,1)}    \l|\scp{\de_1}{R^{\pm}_N(z,\om)\de_{N+k}}\r|
\eeqs
and we are done.
\e{proof}

\be{proof}[Proof of Lemma \ref{lm:thm7}]
 Note that $K$ defined by \eqref{eq:Kdefn} satisfies $K\geq 4$. We follow word-for-word the proofs of Lemmas 2 and 3 in \cite{DT07} and use monotonicity of $\sqrt{\cdot}$ where appropriate. We stress that
  \be{enumerate}
  \item[(a)] these arguments only assume that $V$ is real and bounded and so none of the constants that appear depend on $\om$;
  \item[(b)] while the proof of Lemma 2 may appear to be for the whole-line case, due to the $m_-(z)$-term in formulae (35) and (36), it also applies to the half-line case with Dirichlet boundary condition (in which case the $m_-(z)$ disappears). In fact, \cite{DT07} note in the introduction that they consider both, the full-line and the half-line case, simultaneously. See also \cite{KKL}, where this method was originally developed.
  \e{enumerate}
 Together with Lemma \ref{lm:lm1}, these arguments imply that for all $N\geq 3$ and all $E\in [-K,K]$, we have
 \beqs
 \sum_{k=0}^{\it} \sup_{\om}\l|\scp{\de_1}{\frac{1}{(R(z,\om))}\de_{N+k}}\r|
 \leq C\frac{1}{\veps(1-e^{-C'\veps})}  \sup_\om\|\Phi_{N}(z,\om)\|^{-1},
 \eeqs
 for all $N\geq 3$. Since the left-hand side is monotone decreasing in $N$, we can take the minimum over $3\leq N_1\leq N$ and we are done.
 \e{proof}

\subsection{Lower Bounds on Transfer Matrix Norms and Conclusion}
Lower bounds on transfer matrix norms $\|\Phi_N(z,\om)\|$ for the Fibonacci Hamiltonian can be obtained by studying the ``trace map'', a second-order difference equation for the sequence
\beq
\label{eq:xMdefn}
x_M\equiv x_M(z,0)\equiv \text{tr } \frac{1}{2}\Phi_{F_M}(z,0),
\eeq
 where $F_M$ denotes the $M$-th Fibonacci number and the transfer matrix $\Phi$ was defined in Lemma \ref{lm:thm7}. Bounds on $|x_M|$ lead to bounds on $\|\Phi_{F_M}(z,0)\|$ via the trivial estimate
 \beq
 \label{eq:trivial}
 \|A\|\geq \frac{1}{2}|\text{tr }A|,
 \eeq
  which holds for any $2\times 2$ matrix $A$. For a detailed exposition of the trace map, we refer to Section~4 of \cite{DT07}. Here, we use an improved version of the results of \cite{DT07} established by \cite{DGY} to prove their Proposition~3.8.

\be{prop}[\cite{DGY}]
\label{prop:trace}
Let $\om=0$. There exists $\de>0$ such that for all $\veps\in(0,1]$ and all $E\in[-K,K]$, we have
\beqs
|x_M(E+i\veps,0)| \geq (1+\de)^{F_{M-M_0}}
\eeqs
for all $M\geq M_0+1$, where $M_0\geq K$ is chosen such that
\beq
\label{eq:M0defn}
 C_\de' F_{M_0}^{-s'}<\veps,
\eeq
where $s'>\al'$ and $C_\de'>0$ is an appropriate constant.
\e{prop}

\be{proof}
See the proof of Proposition 3.8 (b) in \cite{DGY}.
\e{proof}

We are now ready to give the

\be{proof}[Proof of Theorem \ref{thm:LRc}]
Recall that $N\equiv 1+j'-j$ and recall the definition of $\veps$ in \eqref{eq:epsdefn}. We start from \eqref{eq:pf2}, apply \eqref{eq:pf5} to extend to the half-line, \eqref{eq:reduction} to reduce to the case $j=1$ at the price of a $\sup_\om$ and finally we apply Lemma \ref{lm:thm7} to conclude
\beq
\label{eq:proof1}
\begin{aligned}
 \|[\tau_t^n(c_j),B]\|
 \leq &C\|B\| e^{-C_T(j'-j)}+C'\|B\| \l(\frac{1}{\veps(1-e^{-C_T\veps})}\r)^2\\
    &\times\int_{-K}^K
    \min_{3\leq N_1\leq N}\sup_{\om\in [0,1)} \|\Phi_{N_1}(E+i\veps,\om)\|^{-1}\d E.
\end{aligned}
\eeq
As remarked before, we can safely ignore the first term, since it gives an LR bound with $v=0$. Moreover, since $\|[\tau_t^n(c_j),B]\|\leq 2 \|B\|$  and the claimed LR bound allows for a constant on the right-hand side, we may ignore order-one quantities in the following.

\dashuline{ Step 1:}
We would like to apply Proposition \ref{prop:trace} together with \eqref{eq:trivial} to bound the transfer matrix norms from below by an \emph{exponentially} increasing quantity. However, it is assumed in Proposition \ref{prop:trace} that $\om=0$, while we require uniformity in $\om$. In order to extend to general $\om\in [0,1)$, we use results of \cite{D05} (this possibility was already noted in passing in \cite{DT07}): According to Proposition 3.4 in \cite{D05}, we have
\beqs
    x_M(E,\om) = x_M(E,0)
\eeqs
for all $E\in\set{R}$, all $\om\in[0,1)$ and either (a) all odd $M$ or (b) all even $M$. Both sides of this equation are complex analytic in $E$, which is obvious from the usual definition of the transfer matrices of positive index, see (12) in \cite{DT07}. Thus, we can extend the relation to \beq
\label{eq:complex}
    x_M(z,\om) = x_M(z,0),
\eeq
with $z$ complex and $M$ as before.

 \dashuline{ Step 2:}
We choose $M_1'$ to be the largest integer such that $F_{M_1'}\leq N$, i.e.\ we have
 \beqs
    F_{M_1'}\leq N < F_{M_1'+1}.
 \eeqs
If it so happens that \eqref{eq:complex} holds for all even (odd) integers $M$, but $M_1'$ is odd (even), we set $M_1\equiv M_1'-1$. Otherwise, we set $M_1\equiv M_1'$. We can assume $M_1\geq 3$, because $M_1<3$ yields an order-one bound on $N=1+j'-j$ and such terms can be ignored as we explained before.

Then, we estimate the minimum in \eqref{eq:proof1} by the $M_1$-th term and we use \eqref{eq:trivial}, \eqref{eq:complex} and Proposition \ref{prop:trace} to find, for some $\de>0$,
\beq
\label{eq:expression}
  \|[\tau_t^n(c_j),B]\|
    \leq C\|B\| e^{-C_T(j'-j)}
        + C'\|B\|\l(\frac{1}{\veps(1-e^{-C_T\veps})}\r)^2 (1+\de)^{-F_{M_1-M_0}}
\eeq
\emph{if} we have $M_1\geq M_0+1$ with $M_0$ chosen minimally, i.e.\
\beqs
C_\de' F_{M_0}^{-s'}<\veps\leq C_\de' F_{M_0-1}^{-s'}.
\eeqs
Next, we will investigate this condition further.

\dashuline{ Step 3:} The first inequality right above is equivalent to
\beq
\label{eq:Festimate}
   F_{M_0}> \l(\frac{C_\de'}{\veps}\r)^{1/s'}.
\eeq
We will use the well-known fact that
 \beq
 \label{eq:fibo}
    \frac{\phi^l}{\sqrt{5}}-\frac{1}{2}\leq F_l \leq \frac{\phi^l}{\sqrt{5}}+\frac{1}{2}
 \eeq
 for all $l\geq 0$. It implies that, up to order-one constants, one can replace $F_l$ by $\phi^l/\sqrt{5}$.
Using this and convexity of the exponential function, we conclude that the second term in \eqref{eq:expression} is bounded by
\beq
\label{eq:expression2}
C \l(\frac{1}{\veps(1-e^{-C_T\veps})}\r)^2 e^{-\mu'(F_{M_1}-F_{M_0})},
\eeq
where we introduced the positive quantity
\beqs
    \mu' =\frac{2\log(1+\de)}{\sqrt{5}}.
\eeqs
We use \eqref{eq:fibo} and recall the definitions of $M_0,M_1'$ as certain minimal/maximal integers to get
\beqs
\begin{aligned}
    F_{M_0}&\leq \phi F_{M_0-1}+2\leq \phi  \l(\frac{C_\de'}{\veps}\r)^{1/s'}+2,\\
    F_{M_1}&\geq \phi^{-1} F_{M_1'} -1 \geq \phi^{-2}(F_{M_1'+1}-C) \geq  \phi^{-2} N -C.
\end{aligned}
\eeqs
We use these to bound \eqref{eq:expression2} by
\beq
\label{eq:expression3}
C\l(\frac{1}{\veps(1-e^{-C_T\veps})}\r)^2 e^{-\mu(N-v\veps^{-1/s'})},
\eeq
with
\beq
\label{eq:vdefn}
    \mu=\phi^{-2} \mu',\qquad v=\phi^3 (C_\de')^{1/s'}
\eeq
and this bound holds for $N\geq v\veps^{-1/s'}+C'$ for some universal constant $C'$.

\dashuline{ Step 4:}
We come to the conclusion, which mainly involves making order-one changes to accommodate some exceptional cases. Suppose that $t\geq 1$, i.e.\ $\veps =t^{-1}$ according to the definition of $\veps$ in \eqref{eq:epsdefn}.
It is crucial that the pre-factor in \eqref{eq:expression3}, which quantifies the cost of our two Combes-Thomas estimates, can be bounded via
\beqs
 \l(\frac{t}{1-e^{-C_T t^{-1}}}\r)^2 \leq C_T' t^4
\eeqs
for all $t\geq 1$, where $C_T'$ is a universal constant. Moreover, $C_T' t^4$ can be bounded in terms of the exponential increase in $t$ in \eqref{eq:expression3}, by an order-one constant in front \emph{and} a change of $1/s'>\al'$ to a slightly larger value, but since $1/s'$ may be arbitrarily close to $\al'$ this change is irrelevant. We have shown that
\beq
\label{eq:proof3}
    \|[\tau_t^n(c_j),B]\| \leq C \|B\|\exp\l(  -\mu(|j'-j|-v  t^{1/s'})\r)
\eeq
for all $t\geq 1$, all $1/s'>\al'$, whenever $|j'-j|-v  t^{\al(\lam)}\geq C'$. In the case $t<1$, we have $\veps=1$ according to \eqref{eq:epsdefn} and all occurrences of $t$ in the previous argument can be replaced by order-one quantities. Since the exponential is bounded in $t<1$, we can then re-instate the $t$-dependence by yet another order-one change of the constants. Hence, \eqref{eq:proof3} extends to all $t>0$.

Finally, when $|j'-j|-v  t^{1/s'}< C'$, the argument of the exponential in the LR bound is bounded from below and hence the entire right-hand side is at least order-one. This finishes the proof.
\e{proof}

\section{Proof that $\al'=\al_u^+$}
\label{sect:al'}
In this section, we prove Proposition \ref{prop:al'}, which we recall states that
\beqs
    \al'(\lam)=\al_u^+(\lam)
\eeqs
for all $\lam>0$ (we will suppress $\lam$ from the notation from now on). The proof will proceed via the following two lemmas. The first one features some other transport exponents and we recall their definitions.

\be{defn}
We write
\beqs
    |X|^p(t)=\sum_{n>0} |n|^p |\scp{e^{-itH}\de_1}{\de_n}|^2
\eeqs
for the $p$-th moment of the position operator. For any function $f(t)$, define its \emph{time-average} by
\beqs
 \langle f\rangle(T) = \frac{2}{T} \int_0^\it e^{-2t/T}f(t) \, \d t
\eeqs
for all $T>0$. We define the transport exponents
\beq
\label{eq:betadefn}
\begin{aligned}
    \beta^+(p) & =\limsup_{t\rightarrow\it}\frac{\log|X|^p(t)}{p\log t}, \\
    \tilde{\beta}^+(p) & =\limsup_{t\rightarrow\it}\frac{\log\langle|X|^p(t)\rangle}{p\log t}.
\end{aligned}
\eeq
The time-averaged upper transport exponent can be defined by analogy with \eqref{eq:PNt} and \eqref{eq:alphauplus}, or by
\beq
\label{eq:tildealdefn}
    \tilde \al_u^+= \lim_{p\rightarrow\it}\tilde\beta^+(p).
\eeq
The two definitions are equivalent; see Theorems 2.18, 2.22 of \cite{DT10}, where it is also shown that
\beq
\label{eq:aldefn2}
    \al_u^+= \lim_{p\rightarrow\it}\beta^+(p).
\eeq
\e{defn}

\begin{lm}
\label{lm:averaging}
We have $\tilde \alpha_u^+ \le \alpha_u^+$.
%\begin{enumerate}
%\item[(i)] $\tilde \beta^+(p) \le \beta^+(p)$ for any $p\in (0,\infty)$.
%\item[(ii)] $\tilde \alpha_u^+ \le \alpha_u^+$.
%\end{enumerate}
\end{lm}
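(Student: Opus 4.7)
The plan is to prove the stronger pointwise-in-$p$ inequality $\tilde{\beta}^+(p)\le \beta^+(p)$ for every $p>0$ and then pass to the limit $p\to\infty$, invoking \eqref{eq:tildealdefn} and \eqref{eq:aldefn2} to conclude $\tilde{\alpha}_u^+ \le \alpha_u^+$. The underlying mechanism is that the exponential weight $\frac{2}{T}e^{-2t/T}$ is a probability density concentrated at times $t \sim T$, so a polynomial upper bound $|X|^p(t)\lesssim t^{p\beta}$ for large $t$ should integrate to a bound of the same polynomial order $T^{p\beta}$ for $\langle|X|^p\rangle(T)$, with only lower-order corrections from the short-time regime.

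Concretely, fix $p>0$ and $\veps>0$ and set $\beta=\beta^+(p)+\veps$. By the definition of $\limsup$ in \eqref{eq:betadefn}, there exists $t_0=t_0(p,\veps)$ such that $|X|^p(t)\le t^{p\beta}$ for all $t\ge t_0$. For $t\in[0,t_0]$, the quantity $|X|^p(t)$ is uniformly bounded by some $M=M(p,t_0)$ (e.g.\ by the usual ballistic estimate coming from the fact that $H$ is a bounded Jacobi matrix, so that $\|X e^{-itH}\delta_1\|$ grows at most linearly in $t$). I would then split
\begin{equation*}
\langle |X|^p\rangle(T) = \frac{2}{T}\int_0^{t_0} e^{-2t/T}|X|^p(t)\,dt + \frac{2}{T}\int_{t_0}^\infty e^{-2t/T}|X|^p(t)\,dt.
\end{equation*}
The first integral is bounded by $2Mt_0/T = O(1/T)$. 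For the second, I bound $|X|^p(t)$ by $t^{p\beta}$ and extend the integration to $[0,\infty)$; the standard Gamma integral $\int_0^\infty e^{-2t/T}t^{p\beta}\,dt = (T/2)^{p\beta+1}\Gamma(p\beta+1)$ then yields $\langle|X|^p\rangle(T) \le C(p,\beta)\,T^{p\beta}+O(1/T)$.

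Taking $\log/(p\log T)$ and letting $T\to\infty$ gives $\tilde{\beta}^+(p)\le \beta=\beta^+(p)+\veps$, and since $\veps>0$ was arbitrary, $\tilde{\beta}^+(p)\le \beta^+(p)$. Passing to the limit $p\to\infty$ via \eqref{eq:tildealdefn} and \eqref{eq:aldefn2} delivers $\tilde{\alpha}_u^+\le \alpha_u^+$.

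No step here is genuinely hard; the only mild care is in (a) justifying the local boundedness of $|X|^p(t)$ on $[0,t_0]$, which follows either from the ballistic bound for a bounded Jacobi matrix or directly from $\|H\|<\infty$ via $\|Xe^{-itH}\delta_1\|\le \|X\delta_1\|+\int_0^t\|[H,X]e^{-isH}\delta_1\|\,ds$, and (b) extracting the right power of $T$ from the Laplace-type integral, which is a scaling computation. The natural alternative strategy of comparing $P(N,t)$ directly to its time average via Parseval or Abel summation would lead to the same inequality, but the moment route above is the cleanest because it matches the framework developed in \cite{DT10} which is already cited for the equivalence of the two definitions of $\tilde\alpha_u^+$.
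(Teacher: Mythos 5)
Your proposal is correct and follows essentially the same route as the paper: prove the pointwise inequality $\tilde\beta^+(p)\le\beta^+(p)$ by bounding the exponentially weighted time average with a Gamma-type integral, then pass to $p\to\infty$ via \eqref{eq:tildealdefn} and \eqref{eq:aldefn2}. Your treatment is in fact slightly more careful than the paper's, which absorbs the short-time regime into a single constant $C$ in the bound $|X|^p(t)\le Ct^{p\gamma}$ rather than splitting the integral at $t_0$.
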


\begin{proof}
Fix $p \in (0,\infty)$ and take an arbitrary $\gamma > \beta^+(p)$. Then
\[
\lvert X \rvert^p(t) \le C t^{p\gamma}
\]
for some $C$ independent of $t$, so the time-averaged $p$-th moment obeys
\[
\langle \lvert X \rvert^p \rangle (T) = \frac 2T \int_0^\infty  e^{-2t/T} \lvert X \rvert^p(t) dt \le 2 C C_1 T^{p\gamma},
\]
where $C_1 = \int_0^\infty e^{-2x} x^{p\gamma} dx$. This in turn implies $\tilde\beta^+(p) \le \gamma$, which implies
\[
\tilde \beta^+(p) \le \beta^+(p)
\] for any $p\in (0,\infty)$. The claim now follows from \eqref{eq:tildealdefn} and \eqref{eq:aldefn2}.
\end{proof}

\be{lm}
\label{lm:al'}
We have $\al_u^+\leq \al'$.
\e{lm}

\be{proof}
This is just a minor modification of arguments in \cite{DGY}, in which one replaces Parseval's identity with Dunford functional calculus formula \eqref{eq:dunford} to obtain the analogue of formula (26) in \cite{DGY} with $\tilde{\al}_u^+$ replaced by $\al_u^+$. More precisely, instead of invoking \cite{DT07} to bound averaged probabilities, as done on p. 27 of \cite{DGY}, one uses the same formula without averaging derived in Theorem 1 of \cite{DT08}.

Then, part (b) of Proposition~3.8 in \cite{DGY} directly yields the same bound on $\al_u^+$ as the one on $\tilde{\al}_u^+$ in part (c) and the existence of the limit is established by \cite[Proposition 3.7]{DGY}.
\e{proof}

\be{rmk} The reason why the averaging in \cite{DGY} can be removed is that one is dealing with \emph{upper bounds} and thus the triangle inequality for integrals is available to control the possible oscillation in the Dunford functional calculus formula \eqref{eq:dunford}.
\e{rmk}

\be{proof}[Proof of Proposition \ref{prop:al'} and Corollary \ref{cor:ofproof}]
From Lemmas \ref{lm:averaging} and \ref{lm:al'}, we have, for all $\lam>0$,
\beqs
    \tilde{\al}_u^+\leq \al_u^+\leq \al'.
\eeqs
By Proposition 3.8 (c) and Proposition 3.7 in \cite{DGY}, we also have
\beqs
    \al'\leq \tilde{\al}_u^+
\eeqs
and we are done.
\e{proof}

\section{Remark on the Random Dimer Model}

We conclude with a brief discussion as to why the method of this paper will \emph{not} yield anomalous LR bounds of power-law type for the XY chain with ``random dimer'' external magnetic field. This section is mostly intended for experts and we refer to \cite{JS, JSS} for details, in particular for the precise definition of the random dimer model. The \textbf{main message} is as follows: Consider the sum over fermionic commutators in \eqref{eq:ksum}, which comes from the non-locality of the Jordan-Wigner transformation. In the Fibonacci case, the summands were decaying \emph{exponentially}, so the sum decays also exponentially and we could conclude that $LR_{\textnormal{fermi}}(\al)$ implies $LR(\al)$, with the same $\al$! On the other hand, if the fermionic commutators only decay like a \emph{power law}, as we will see is the case for the random dimer model, the sum in \eqref{eq:ksum} decreases the power-law decay by one and so, as far as our bounds go, the many-body transport is truly faster than the one-body transport on the power-law scale.

The random dimer model, introduced by Dunlap, Wu and Philips \cite{DWP}, is given by a one-dimensional discrete Laplacian together with a random potential which may take only two values $\pm \lam$ with $\lam<1$, but these values always appear in \emph{pairs}. A characteristic feature of this model is that the dimer-to-dimer transfer matrices commute at the so-called ``critical energies'' $E_c=\pm \lam$ and that in this (non-generic) case the system exhibits non-trivial transport, in contrast to the usual Anderson localization of a one-dimensional disordered quantum system.

More precisely, it follows from \cite{DT08, JS, JSS} that for the random dimer model, the transport exponent $\beta^{+}(p)$ defined in \eqref{eq:betadefn} satisfies
\beq
\label{eq:betaplusRD}
 \beta^+(p)=\max\l\{0,1-\frac{1}{2p}\r\},\quad \forall p>0.
\eeq
According to \eqref{eq:aldefn2}, we have
\beqs
\al_u^{+}=1.
\eeqs
We now consider an XY chain with external magnetic field given by pairs of random dimers $\pm \lam$. As pointed out in Remark~\ref{r.generallowerbound}, the argument that proved Theorem~\ref{thm:result2} generalizes to this case and so the best $LR(\al)$, in the sense of Definition \ref{defn:LRalpha}, that can hold for this model, is $LR(1)$, but $LR(1)$ holds for much more general models anyway \cite{NachtergaeleSims06}.

Roughly speaking, $\al_u^{+}=1$ means that the one-dimensional quantum particle has exponentially small probability to be observed a distance of order $t$ away from its initial location after time $t$ has passed. While the probability of observation is \emph{not} exponentially small for distances of order $t^\beta$ with $0 < \beta < 1$, it is \emph{polynomially} small for $\beta$ sufficiently close to $1$, since $\beta^+(p) < 1$. With this more refined perspective in mind, one could hope to prove an \emph{anomalous LR bound of power-law type} such as
\beq
\label{eq:powerlawLR}
 \ex \|[\tau_t^n(A),B]\|\leq C_1 \l(\frac{t^\beta}{|j-j'|}\r)^{\mu},
\eeq
for the random dimer model. Here, the objects $A,B,n,j,j',C,\mu$ are chosen as in Theorem \ref{thm:result}, now of course for the random dimer model, $\ex$ denotes the expectation over the randomness and $\beta>0$ should be related to $\beta^+(p)$ in some way. Let us now argue why the Jordan-Wigner method will \emph{not} give such a bound with $\beta<1$.

Following our argument for the Fibonacci case, one first proves a fermionic LR bound of power-law type. Adapting the arguments of \cite{JS} to our purposes (the main challenge again being to go from $\ell^2$-norms to $\ell^1$-norms), one finds
\beq
\label{eq:randomdimer}
 \ex \|[\tau_n^t(c_j),B]\|+ \ex \|[\tau_n^t(c_j^*),B]\|
 \leq C_1 \l( \frac{t^{\beta^+(p)+1/p+\nu}}{|j-j'|} \r)^p
\eeq
for any $\nu>0$. While it is conceivable that the extra $1/p$ term in the exponent is technical and can be removed, one still has the following problem: To obtain the LR bound for the corresponding XY chain, one has to take a sum over fermionic LR bounds, see \eqref{eq:ksum}. This yields, even without the $1/p$ term,
\beqs
    C_1  t^{p\beta^+(p)+\nu'} \sum_{l=1}^j \l( \frac{1}{l+|j'-j|}   \r)^p \leq C_1'  \frac{t^{p\beta^+(p)+\nu'}}{|j'-j|^{p-1}}
\eeqs
for any $\nu'>0$ and any $p>1$ (for $p\leq 1$, the sum diverges). Recalling \eqref{eq:betaplusRD}, we see that $p\beta^+(p)=p-1/2$ for $p>1$. Hence, the right-hand side above reads
\beqs
    C_1' \l(\frac{t^{\frac{p-1/2}{p-1}+\nu''}}{|j'-j|}\r)^{p-1}
\eeqs
for any $\nu''>0$. Of course,
\beqs
    \frac{p-1/2}{p-1}>1
\eeqs
and so this does \emph{not} yield an anomalous LR bound \eqref{eq:powerlawLR} with $\beta<1$.\footnote{Note that for $\beta\geq 1$, the LR bound of power-law type \eqref{eq:randomdimer} is weaker than $LR(1)$, which always holds and yields an exponentially small error term.} In summary, we have seen that the anomalous one-body transport of the random dimer model is still too fast to ``survive'' the summation \eqref{eq:ksum} that arises from the non-locality of the Jordan-Wigner transformation and hence too fast to yield an anomalous LR bound of power-law type on the many-body level.

\bibliographystyle{amsplain}

\begin{thebibliography}{10}

\bibitem{CombesThomas}
J.M. Combes and L.~Thomas, \emph{{Asymptotic behaviour of eigenfunctions for
  multiparticle Schr\"odinger operators}}, Comm. Math. Phys. \textbf{34}
  (1973), 251--270.

\bibitem{D05}
D.~Damanik, \emph{Dynamical upper bounds for one-dimensional quasicrystals}, J.
  Math. Anal. Appl. \textbf{303} (2005), 327--341.

\bibitem{DGLQ}
D.~Damanik, A.~Gorodetski, Q.-H.~Liu, Y.-H.~Qu, \emph{Transport exponents of Sturmian Hamiltonians}, J. Funct. Anal. \textbf{269} (2015), no. 5, 1404 –- 1440.

\bibitem{DGY}
D.~Damanik, A.~Gorodetski, and W.~Yessen, \emph{{The Fibonacci Hamiltonian}},
  arXiv:1403.7823, to appear in Invent. Math.

\bibitem{DKL00}
D.~Damanik, R.~Killip, and D.~Lenz, \emph{{Uniform spectral properties of
  one-dimensional quasicrystals. III. $\alpha$-continuity}}, Comm. Math. Phys.
  \textbf{212} (2000), 191--204.

	\bibitem{DLLY-PRL}
D.~Damanik, M.~Lemm, M.~Lukic, and W.~Yessen, \emph{{New anomalous Lieb-Robinson bounds in quasiperiodic XY chains}}, Phys. Rev. Lett. \textbf{113} (2014), 127202.

\bibitem{DLY}
D.~Damanik, M.~Lukic, and W.~Yessen, \emph{{Quantum dynamics of periodic and
  limit-periodic Jacobi and block Jacobi matrices with applications to some
  quantum many body problems}}, preprint.
	
\bibitem{DT07}
D.~Damanik and S.~Tcheremchantsev, \emph{Upper bounds in quantum dynamics}, J.
  Amer. Math. Soc. \textbf{20} (2007), 799--827.

\bibitem{DT08}
\bysame, \emph{Quantum dynamics via complex analysis methods: general upper
  bounds without time-averaging and tight lower bounds for the strongly coupled
  fibonacci hamiltonian}, J. Funct. Anal \textbf{255} (2008), 2872--2887.

\bibitem{DT10}
\bysame, \emph{{A general description of quantum dynamical spreading over an
  orthonormal basis and applications to Schr\"odinger operators}}, Discrete
  Contin. Dyn. Syst. A \textbf{28} (2010), 1381--1412.

\bibitem{DunfordSchwartz}
N.~Dunford and J.~Schwartz, \emph{{Linear Operators. Part I. General Theory}},
  Wiley, 1988.

\bibitem{DWP}
D.H. Dunlap, Wu~H.-L., and P.W. Phillips, \emph{{Absence of localization in
  random-dimer model}}, Phys. Rev. Lett. \textbf{65} (1990), 88--91.

\bibitem{GKT04}
F.~Germinet, A.~Kiselev, and S.~Tcheremchantsev, \emph{{Transfer matrices and
  transport for Schr\"odinger operators}}, Ann. Henri Poincar\'e \textbf{54}
  (2004), 2872--2887.

\bibitem{HSS11}
E.~Hamza, R.~Sims, and G.~Stolz, \emph{{Dynamical localization in disordered
  quantum spin systems}}, Comm. Math. Phys. \textbf{315} (2012), 215--239.

\bibitem{Hastings04}
M.B.~Hastings, \emph{{Lieb-Schultz-Mattis in Higher Dimensions}}, Phys. Rev. B \textbf{69} (2004), 104431.

\bibitem{Hastings07}
M.B.~Hastings, \emph{{An area law for one-dimensional quantum systems}}, J. Stat. Mech. (2007), P08024.

\bibitem{HastingsKoma06}
M.B.~Hastings and T.~Koma \emph{{Spectral Gap and Exponential Decay of Correlations}}, Comm. Math. Phys. \textbf{265}, 781 (2006).

\bibitem{JS}
S.~Jitomirskaya and H.~Schulz-Baldes, \emph{{Upper bounds on wavepacket
  spreading for random Jacobi matrices}}, Comm. Math. Phys. \textbf{273}
  (2007), 601--618.

\bibitem{JSS}
S.~Jitomirskaya, H.~Schulz-Baldes, and G.~Stolz, \emph{{Delocalization in
  random polymer models}}, Comm. Math. Phys. \textbf{233} (2003), 27--48.

\bibitem{KKL}
R.~Killip, A.~Kiselev, and Y.~Last, \emph{{Dynamical upper bounds on wavepacket
  spreading}}, Amer. J. Math. \textbf{125} (2003), 1165--1198.

\bibitem{KleinPerez}
A.~Klein and J.~W. Perez, \emph{Localization in the ground-state of the one-dimensional X-Y model with a random transverse field}, Comm. Math. Phys. \textbf{128} (1990), 99--108.

\bibitem{KruegerPHD}
H.\ Kr\"uger, \emph{{Positive Lyapunov Exponent for Ergodic Schr\"odinger
  Operators}}, Phd thesis, Rice University, 2010.

\bibitem{LSM61}
E.~Lieb, T.~Schultz, and D.~Mattis, \emph{Two soluble models of an
  antiferromagnetic chain}, Ann. Phys. \textbf{16} (1961), 407--466.

\bibitem{LiebRobinson72}
E.~H. Lieb and D.~W. Robinson, \emph{The finite group velocity of quantum spin
  systems}, Comm. Math. Phys. \textbf{28} (1972), 251--257.

\bibitem{Marin2010}
L.~Marin, \emph{{Dynamical bounds for Sturmian Schr\"odinger operators}}, Rev.
  Math. Phys. \textbf{22} (2010), 859--879.

\bibitem{NachtergaeleOgataSims06}
B.~Nachtergaele, Y.~Ogata and R.~Sims, \emph{{Propagation of Correlations in Quantum Lattice Systems}}, J. Stat. Phys. \textbf{124} (2006), 1-13.

\bibitem{NachtergaeleRazSchleinSims09}
B.~Nachtergaele, H.~Raz, B.~Schlein and R.~Sims, \emph{{Lieb-Robinson bounds for harmonic and anharmonic lattice systems}}, Comm. Math. Phys. \textbf{286} (2009), 1073-1098.


\bibitem{NachtergaeleSims06}
B.~Nachtergaele and R.~Sims, \emph{{Lieb-Robinson bounds and the exponential
  clustering theorem}}, Comm. Math. Phys. \textbf{265} (2006), 119--130.

\end{thebibliography}

\end{document}